\titleformat*{\section}{\large\bfseries}
\titleformat*{\subsection}{\it}
\newtheorem{thm}{Theorem}
\newtheorem{lem}{Lemma}
\def\be{{\beta}}
\def\ep{{\varepsilon}}
\def\Si{{\Sigma}}
\def\beh{{\widehat \beta}}
\def\Vh{\widehat{V}}
\def\bet{{\widetilde \be}}
\def\ept{{\widetilde \ep}}
\def\Si{{\Sigma}}
\def\Sih{{\widehat \Sigma}}
\def\psih{{\widehat \psi}}
\def\tr{{\rm tr}}
\def\E{{\rm E}}
\def\Re{{\mathbb R}}
\title{{\bf Improved Confidence Regions in Meta-analysis of Diagnostic Test Accuracy}}
\date{}
\author{}
\begin{document}

\maketitle
\doublespacing

\vspace{-1.5cm}
\begin{center}
{\large 
Tsubasa Ito$^1$ and Shonosuke Sugasawa$^2$
}
\end{center}

\noindent
$^1$M\&D Data Science Center, Tokyo Medical and Dental University\\
$^2$Center for Spatial Information Science, The University of Tokyo

\medskip
\begin{center}
{\bf \large Abstract} 
\end{center}

\vspace{-0cm}
Meta-analyses of diagnostic test accuracy (DTA) studies have been gathering attention in research in clinical epidemiology and health technology development, and bivariate random-effects model is becoming a standard tool. However, standard inference methods usually underestimate statistical errors and possibly provide highly overconfident results under realistic situations since they ignore the variability in the estimation of variance parameters. To overcome the difficulty, a new improved inference method, namely, an accurate confidence region for the meta-analysis of DTA, by asymptotically expanding the coverage probability of the standard confidence region. The advantage of the proposed confidence region is that it holds a relatively simple expression and does not require any repeated calculations such as Bootstrap or Monte Carlo methods to compute the region, thereby the proposed method can be easily carried out in practical applications. The effectiveness of the proposed method is demonstrated through simulation studies and an application to meta-analysis of screening test accuracy for alcohol problems.

\bigskip\noindent
{\bf Key words}: Asymptotic expansion; Bias correction; Confidence region; random-effects model

\newpage
\section{Introduction}

Evidence synthesis methods have been gathering attention in diagnostic test accuracy (DTA) studies in clinical epidemiology and health technology development \citep{Lee2008}.
In this meta-analysis, the summary statistics in each study are two primary correlated outcomes of diagnostic, sensitivity and false positive rate ($1-{\rm specificity}$), and we are typically interested in summary receiver operating characteristic curve.
Moreover, DTA from different sources for studies are generally heterogeneous due to various factors, which should be adequately addressed to avoid underestimation of statistical errors and misleading conclusions \citep{Higgins2011}. 
Due to the potential correlations between two summary measures and potential heterogeneity, the bivariate random-effects models is adopted as the standard method for the meta-analysis \citep{Reitsma2005, Harbord2007}.

In the bivariate random-effects meta-analyses, standard inference methods depend on large sample approximations for the number of studies synthesized, for example the extended DerSimonian-Laird methods \citep{Chen2012, Jackson2010, Jackson2013} and restricted maximum likelihood (REML) estimation \citep{Reitsma2005, Jackson2011}, but the numbers of trials are often moderate or small in practice. 
In this situation, validity of the inference methods can be violated, which may lead over-confidence results, that is, coverage probabilities of the confidence regions or intervals cannot retain their nominal confidence levels and also the type-I error probabilities of the corresponding tests can be inflated. 
Such problem with random-effects models was well recognized in the context of both univariate and multivariate meta-analysis, even when the models are completely specified \citep{Veroniki2019}.
Recently, several refined methods have been proposed to improve confidence intervals in multivariate meta-analysis.
For example, \cite{Noma2018} developed improved confidence intervals in network meta-analysis using Bartlett-type corrections, and \cite{Noma2019} and \cite{Sugasawa2019} developed a unified method for computing accurate confidence intervals and regions in general random-effects meta-analysis.
However, these methods require computationally very intensive methods based on Monte Carlo or Bootstrap methods.
Also these methods considered confidence intervals or regions by inverting statistical hypothesis tests, thereby feasible ways to construct confidence regions are not necessarily obvious. 
On the other hand, there are a few analytical approaches to improve the standard approaches. 
\cite{Noma2011} and \cite{Guolo2012} considered higher order likelihood inference in the univariate meta-analysis, which cannot be directly applicable to more complicated multivariate meta-analysis. 
As more general approaches, \cite{Zucker2000} proposed an improved likelihood test in general linear mixed models through asymptotic expansions of the (restricted) maximum likelihood estimators, but the results include tedious algebraic expressions and are not useful in practice.

In this paper, we propose an improved confidence region for the bivariate random-effects meta-analysis for DTA, which does not require any repeated calculation methods and has relatively simple analytical expressions, thereby the proposed method could be easily employed in practical applications. 
The key mathematical tool is the distributional properties between the ordinary least squares estimator and residuals, and define a class of estimators of variance parameters in random-effects models.
Then, we find a relatively simple formula for asymptotic approximation of the coverage probability of the crude Wald-type confidence intervals and regions, and construct a second order accurate confidence region.
We carry out extensive simulation studies to compare the performance of the proposed confidence region with that of the standard REML method, and demonstrate that the proposed method shows quite reasonable empirical coverage than REML while the computational cost in both methods are almost identical. 
We also demonstrate the proposed method through an application to meta-analysis of screening test accuracy for alcohol problems.

This paper is set out as follows. 
In Section \ref{sec:method}, we describe the proposed confidence region under bivariate random-effects models. 
In Section \ref{sec:num}, we numerically demonstrate the proposed confidence region together with existing methods through extensive simulation studies and an application with real dataset.
We conclude with a short discussion in Section \ref{sec:Disc}.
R code implementing the proposed method is available at GitHub repository (\url{https://github.com/sshonosuke/CCR-BMA}).

\section{Improved Confidence Regions in Meta-analysis for Diagnostic Test Accuracy}\label{sec:method}

\subsection{Bivariate random-effects models and confidence region}
There has been increasing interest in systematic reviews and meta-analyses of data from diagnostic accuracy studies.
For this purpose, a bivariate random-effect model \citep{Reitsma2005, Harbord2007} is widely used.
Following \cite{Reitsma2005}, we define $\mu_{Ai}$ and $\mu_{Bi}$ as the logit-transformed true sensitivity and specificity, respectively, in the $i$th study. 
Let $y_{Ai}$ and $y_{Bi}$ be the observed logit-transformed sensitivity and specificity, and $s_{Ai}$ and $s_{Bi}$ are associated standard errors.
The bivariate model assumes that $\mu_i=(\mu_{Ai},\mu_{Bi})^t$ and $y_i=(y_{Ai},y_{Bi})^t$ follow bivariate normal distributions:
\begin{equation}\label{BMA}
y_i|\mu_i\sim N_2(\mu_i,S_i), \ \ \ \ \mu_i\sim N_2(\beta,\Si), \ \ \ \ i=1,\ldots,n,
\end{equation}
where $\beta=(\beta_A,\beta_B)^t$ is a vector of the average logit-transformed sensitivity and specificity, and $S_i={\rm diag}(s_{Ai},s_{Bi})$.
Note that there is no correlation between $y_{Ai}$ and $y_{Bi}$ given $\mu_i$ since sensitivity and specificity are calculated based on individuals identified as positive and negative, respectively.   
Here $\Si$ is unstructured, so that it allows correlation between $\mu_{Ai}$ and $\mu_{Bi}$.
Let $y=(y_1^t,\ldots,y_n^t)^t\in \Re^{2n}$, $X=(X_1,\ldots,X_n)^t$ with $X_i=I_2$ and $S={\rm diag}(S_1,\ldots,S_n)$.
Then, the model (\ref{BMA}) is equivalent to $y\sim N(X\beta,I_n\otimes \Sigma+S)$.

Our primary interest is a confidence region of $\beta$.
Hence, the variance-covariance matrix $\Sigma$ is a nuisance parameter. 
These parameters are typically estimated via (restricted) maximum likelihood methods based on the model assumption (\ref{BMA}). 
For summarizing the results of the meta-analysis, we typically employ confidence region of $\beta$ rather than separate confidence intervals since sensitivity and specificity could be highly correlated.
\cite{Reitsma2005} suggested the $100(1-\alpha)\%$ confidence region for $\beta$ as the interior points of the ellipse defined as
\begin{equation}\label{NCR}
\left\{\beta : (\beh(\Sih)-\beta)^tV(\Sih)^{-1}(\beh(\Sih)-\beta)\leq \chi_2^2(\alpha)\right\},
\end{equation}
where $\beh(\Sigma)$ is the generalized least squares estimator of $\beta$ and $\Sih$ is the restricted maximum likelihood estimator of $\Sigma$, $V(\Si)=\{X^t(I_n\otimes \Sigma+S)X\}^{-1}$ is the variance-covariance matrix of $\beh$, $\Sih$ is the restricted maximum likelihood estimator and $\chi_2^2(\alpha)$ is the upper $100\alpha\%$ point of the $\chi^2$ distribution with $2$ degrees of freedom.
The joint confidence region  (\ref{NCR}) is approximately valid, that is, the coverage error converges to $1-\alpha$ as the number of studies $n$ goes to infinity.
However, when $n$ is not sufficiently large, the coverage error is not negligible, and the region (\ref{NCR}) would under-cover the true $\beta$.

\subsection{Improved confidence region}
In this work, we derive an improved confidence region whose coverage error is $o(n^{-1})$, which has higher order accuracy than the standard confidence region (\ref{NCR}).
The main idea is to derive an approximation formula of the coverage probabilities of the confidence region of the form (\ref{NCR}) with a certain class of estimators for $\Sigma$, and derive an improved confidence region in an analytical form.

We consider a class of estimators $\Sih(y)$ satisfying the following conditions:
\begin{itemize}
\item[(C1)]
$\Sih$ is an even function of $y$ and translation invariant, that is, $\Sih(y)=\Sih(-y)$, and $\Sih(y+c)=\Sih(y)$ for any $c\in\Re^{2n}$.

\item[(C2)]
$\Sih$ is $\sqrt n$-consistent and $\Sih$ is second-order unbiased, namely $\Sih-\Sigma=O(n^{-1/2})$ and $\E[\Sih]=\Sigma+o(n^{-1})$.

\item[(C3)]
$\Sih$ is a function of $Py$ with $P=I_{2n}-X(X^tX)^{-1}X^t$. 
\end{itemize}

The first condition (C1) is typically satisfied by typical estimators including (restricted) maximum likelihood estimator and moment-based estimators.
The $\sqrt{n}$-consistency in (C2) is also a standard condition, but second order unbiasedness of $\psih$ is not always satisfied.
For example, the maximum likelihood (ML) estimator does not necessarily hold the property.
The condition (C3) requires that the estimator should be function of residuals based on ordinary least squares estimator of $\beta$, which is a key assumption in constructing the proposed confidence region.
The condition (C3) enables us to get a relatively simple form of the corrected confidence region.
Note that the typical estimators (e.g. REML) does not satisfy the condition (C3).
As a specific estimator satisfying all the above conditions, we employ the following moment-based estimator:
$$
\Sih_0=
\frac1n \sum_{i=1}^n \left\{(y_i-X_i\beh^{{\rm OLS}})(y_i-X_i\beh^{{\rm OLS}})^t-S_i\right\},
$$
where $\beh^{{\rm OLS}}=(X^tX)^{-1}X^ty$ is the ordinary least squares estimator.
Since this estimator is not second-order unbiased, let $\Sih$ be a bias corrected version, that is, $\Sih=\Sih_0-{\rm Bias}_{\Sih_0}(\Sih)$ with ${\rm Bias}_{\Sih_0}(\Si)=-n^{-2} \sum_{i=1}^n(\Sigma+S_i)$, which satisfies all the conditions (C1)$\sim$(C3).  
We also note that given the estimator of $\Sigma$, the parameter $\beta$ can be estimated via the generalized least squares estimator given by 
$$
\beh(\Sigma)=\{X^t(I_n\otimes \Sigma+S)^{-1}X\}X^t(I_n\otimes \Sigma+S)^{-1}y.
$$

In order to improve the coverage accuracy of the confidence region (\ref{NCR}), we consider a class of confidence regions of the form
\begin{equation}
\label{CCR}
\bigg\{\beta: (\beh(\Sih)-\be)^tV(\Sih)^{-1}(\beh(\Sih)-\be)\leq x(1+h(\Sih))\bigg\},
\end{equation}
where $h(\cdot)$ is a function with order $O(n^{-1})$.
When $h(\Sigma)=0$ and $x=\chi_2^2(\alpha)$, the confidence region (\ref{CCR}) reduces to (\ref{NCR}), thereby the function $h$ can be regarded as an adjustment function to achieve reasonable coverage properties.
If $\Sih$ satisfies the conditions (C1)$\sim$(C3), the approximation formula of coverage probability of the confidence region (\ref{CCR}) can be obtained in a relatively simple form, as summarized in the following theorem.

\begin{thm}\label{thm:cp}
Suppose that $\Sih$ satisfies the conditions (C1)$\sim$(C3), and $h\equiv h(\Sigma)$ is a function with order $O(n^{-1})$.
Then, it follows that 
\begin{align*}
&P\big\{(\beh(\Sih)-\be)^tV(\Sih)^{-1}(\beh(\Sih)-\be)\leq x(1+h)\big\}\\
& \ \ \ \ =F_{k}(x)+hxf_k(x)+\left(\frac{B_1}{4}-\frac{B_2}{2}+2B_3\right)f_{k+2}(x)\\
& \ \ \ \ \ \ \  -\left(\frac{B_1}{4}+\frac{B_2}{2}\right)f_{k+4}(x)+O(n^{-3/2}),
\end{align*}
where $F_{k}(\cdot)$ and $f_k(\cdot)$ are the cumulative distribution and density function of the chi-squared distribution with degrees of freedom $k$, respectively, and $B_1, B_2$ and $B_3$ are $O(n^{-1})$ quantities given by  
\begin{equation}
\label{B123}
B_1=\E\Big[\tr (K(\Sih,\Sigma))^2\Big], \ \ \ \ 
B_2=\tr\Big(\E[K(\Sih,\Sigma)^2]\Big),  \ \ \ \ 
B_3=\tr\Big(\E[K(\Sih,\Sigma)]\Big),
\end{equation}
with $K(\Sih,\Sigma)=\big\{V(\Sih)-V(\Sigma)\big\}V(\Sigma)^{-1}$.
\end{thm}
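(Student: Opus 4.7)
The plan is to exploit an independence structure implied by condition (C3). Write $Z = y - X\beta \sim N(0, \Omega)$ with $\Omega = I_n \otimes \Sigma + S$, and set $M(\Sigma) = V(\Sigma) X^t \Omega^{-1}$, so that the ideal GLS error $\xi := \beh(\Sigma) - \beta$ equals $M(\Sigma) Z \sim N(0, V(\Sigma))$. Its covariance with $Py$ is $M(\Sigma) \Omega P = V(\Sigma) X^t P = 0$, so $\xi$ and $Py$ are jointly normal and uncorrelated, hence independent; by (C3) this gives $\xi \perp \Sih$. Since $M(\Sigma) X = I_k$ for every $\Sigma$, one also has $(M(\Sih) - M(\Sigma)) X = 0$, so $\eta := \beh(\Sih) - \beh(\Sigma) = (M(\Sih) - M(\Sigma))(I - X M(\Sigma)) Z$ is a function of $R := (I - X M(\Sigma)) Z$, and $R$ is uncorrelated with $\xi$; therefore $\xi$ is independent of the whole pair $(\eta, \Sih)$.

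Using $V(\Sih)^{-1} = V(\Sigma)^{-1}(I + K)^{-1} = V(\Sigma)^{-1}(I - K + K^2 - \cdots)$ and $\beh(\Sih) - \beta = \xi + \eta$, expand
\begin{align*}
T &= T_0 + R_1 + R_2 + O_p(n^{-3/2}),\\
T_0 &= \xi^t V(\Sigma)^{-1} \xi \sim \chi_k^2,\\
R_1 &= -\xi^t V(\Sigma)^{-1} K \xi + 2 \xi^t V(\Sigma)^{-1} \eta \;=\; O_p(n^{-1/2}),\\
R_2 &= \xi^t V(\Sigma)^{-1} K^2 \xi - 2 \xi^t V(\Sigma)^{-1} K \eta + \eta^t V(\Sigma)^{-1} \eta \;=\; O_p(n^{-1}).
\end{align*}
Note that $V(\Sigma)^{-1} K = V(\Sigma)^{-1} V(\Sih) V(\Sigma)^{-1} - V(\Sigma)^{-1}$ is symmetric, so the standard Gaussian quadratic-form formulas apply. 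Conditional on $(\Sih, \eta)$, $\xi \sim N(0, V(\Sigma))$ and $T$ is a quadratic perturbation of the $\chi_k^2$ variable $T_0$; a standard Edgeworth-type argument (expanding the conditional CDF in $R_1, R_2$ and taking the outer expectation) produces an expansion of $P(T \leq y)$ of the form $F_k(y) + c_1 f_{k+2}(x) + c_2 f_{k+4}(x) + o(n^{-1})$, the higher-order chi-squared densities arising from the identities $x^j f_k(x) = k(k+2)\cdots(k+2j-2)\, f_{k+2j}(x)$. Setting $y = x(1+h)$ and using $F_k(x(1+h)) = F_k(x) + h x f_k(x) + O(n^{-2})$ yields the $hx f_k(x)$ term.

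To compute $c_1$ and $c_2$, set $\tilde\xi := V(\Sigma)^{-1/2} \xi \sim N(0, I_k)$ and apply the Isserlis identities $\E[\tilde\xi^t A \tilde\xi] = \tr A$ and $\E[(\tilde\xi^t A \tilde\xi)^2] = (\tr A)^2 + 2 \tr A^2$ for any (conditionally fixed) symmetric $A$. Conditional on $\Sih$ these give $\E[\xi^t V(\Sigma)^{-1} K \xi \mid \Sih] = \tr K$, $\E[\xi^t V(\Sigma)^{-1} K^2 \xi \mid \Sih] = \tr K^2$, and $\E[(\xi^t V(\Sigma)^{-1} K \xi)^2 \mid \Sih] = (\tr K)^2 + 2 \tr K^2$. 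Taking outer expectations over $\Sih$ produces precisely $B_1, B_2, B_3$ of (\ref{B123}). Cross terms linear in $\xi$ (such as $\E[\xi^t V(\Sigma)^{-1} \eta]$) vanish by $\xi \perp \eta$ together with $\E[\xi] = 0$; the evenness hypothesis in (C1) handles residual odd-parity cross-terms between $\xi$ and $\Sih$. Second-order unbiasedness (C2) ensures $\E[\Sih - \Sigma] = o(n^{-1})$, so that $B_3 = \tr \E[K]$ captures the bias contribution cleanly at order $n^{-1}$.

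The main obstacle is the careful bookkeeping of the $\eta$-involving contributions in $R_2$. While $\E[\xi^t V(\Sigma)^{-1} \eta] = 0$ is immediate, evaluating $\E[\eta^t V(\Sigma)^{-1} \eta]$ and $\E[\xi^t V(\Sigma)^{-1} K \eta]$ requires a second-order Taylor expansion of $M(\Sih) - M(\Sigma)$ in the entries of $\Sih - \Sigma$ together with fourth-order Gaussian moment computations on $Z$; exploiting the independence of $\xi$ and $R$ collapses these onto the same three traces $B_1, B_2, B_3$. Summing all contributions with the signs coming from the chi-squared density identities then yields the coefficients $B_1/4 - B_2/2 + 2 B_3$ in front of $f_{k+2}(x)$ and $-(B_1/4 + B_2/2)$ in front of $f_{k+4}(x)$, completing the expansion.
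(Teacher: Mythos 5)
Your strategy is essentially the paper's own: you establish that the GLS error $\xi=\beh(\Si)-\be$ is independent of $Py$ (hence of $\Sih$ and of $\eta=\beh(\Sih)-\beh(\Si)$ by (C3)), expand the quadratic form around the conditionally $\chi^2_k$ leading term, apply the Gaussian quadratic-form (Isserlis) identities conditionally on $(\Sih,\eta)$, and convert powers of $x$ into higher-order chi-squared densities. This is exactly the route of the paper's Lemma S1 and the characteristic-function expansion in its proof of Theorem 1, so no credit for novelty of approach is at stake.

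There is, however, a genuine gap at precisely the point you label ``the main obstacle.'' All terms involving $\eta$ at order $n^{-1}$ --- namely $\E[\eta^t V(\Si)^{-1}\eta]$ and the conditional expectation of $(\xi^tV(\Si)^{-1}\eta)^2$, which also reduces to $\E[\eta^tV(\Si)^{-1}\eta]$ --- cannot be expressed through $B_1,B_2,B_3$ by independence and Gaussian moment identities alone: they require the separate identity
\begin{equation*}
\E\big[(\beh(\Sih)-\beh(\Si))(\beh(\Sih)-\beh(\Si))^t\big]=-\big(\E[V(\Sih)]-V(\Si)\big)+O(n^{-5/2}),
\end{equation*}
which is the paper's Lemma S2. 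This identity is where the second-order unbiasedness in (C2) is actually consumed (it kills the first-order bias term $\E[VR(\Sih_n-\Si_n)R^tV]$ in the expansion of $\E[\Vh-V]$), and it is the only mechanism by which the $\eta$-contributions become $-\tr(\E[K])=-B_3$ and combine with $\tr(\E[G_1])\approx B_3-B_2$ to produce the coefficient $B_1/4-B_2/2+2B_3$ of $f_{k+2}(x)$. Your proposal asserts that the independence of $\xi$ and the residual ``collapses these onto the same three traces,'' but that is the conclusion to be proved, not a consequence of independence; without carrying out the two second-order expansions of $\Vh-V$ and of $\eta$ and matching them term by term, the factor $2$ on $B_3$ (and indeed the appearance of $B_3$ at all in the $\eta$-terms) is unjustified. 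A secondary, smaller imprecision: the evenness in (C1) is not needed to handle ``cross-terms between $\xi$ and $\Sih$'' (independence already does that); it is used to make odd functions of the residual vector have zero expectation in the bias computation just described.
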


\bigskip
From Theorem \ref{thm:cp}, It turned out that the coverage probability of the confidence region (\ref{CCR}) is a simple functional of the djustment function $h(\cdot)$.
Hence, to achieve higher accuracy of the confidence region, it suffices to choose $h(\cdot)$ such that 
$$
hxf_k(x)+\left(\frac{B_1}{4}-\frac{B_2}{2}+2B_3\right)f_{k+2}(x)-\left(\frac{B_1}{4}+\frac{B_2}{2}\right)f_{k+4}(x)=0.
$$ 
Since $f_{k+2}(x)/f_k(x)=x/k$ and $f_{k+4}(x)/f_k(x)=x^2/k(k+2)$, the solution with respect to $h$ is given by 
\begin{equation}
\label{h-func}
h(\Sigma)=\frac1k\left(\frac{B_1}{4}-\frac{B_2}{2}+2B_3\right)-\frac{x}{k(k+2)}\left(\frac{B_1}{4}+\frac{B_2}{2}\right).
\end{equation}
We also note that $h(\Sih)=h(\Sigma)+o_p(n^{-1})$ since $h(\cdot)=O(n^{-1})$.
Then, the confidence region given in (\ref{CCR}) with $h(\cdot)$ given in (\ref{h-func}) holds the second-order accuracy as shown in the following theorem.

\begin{thm}\label{thm:ccr}
Let ${\rm CCI}_{\alpha}$ be the confided region of the form (\ref{CCR}) with $h(\cdot)$ given in (\ref{h-func}).
Then, it follows that $P(\beta\in {\rm CCI}_{\alpha})=1-\alpha+o(n^{-1})$.
\end{thm}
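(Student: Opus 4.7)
The plan is to deduce Theorem \ref{thm:ccr} as a direct corollary of Theorem \ref{thm:cp}, exploiting the fact that $h$ in (\ref{h-func}) is engineered precisely to annihilate the $O(n^{-1})$ correction terms in that expansion, and then to specialise to $x=\chi_2^2(\alpha)$ with $k=2$.

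First I would reduce the random threshold $x(1+h(\Sih))$ to the deterministic threshold $x(1+h(\Sigma))$. Since $h(\cdot)=O(n^{-1})$, the gradient of $h$ viewed as a smooth function of $\Sigma$ is also of order $O(n^{-1})$, so a first-order Taylor expansion combined with the $\sqrt{n}$-consistency in (C2) yields $h(\Sih)-h(\Sigma)=O_p(n^{-3/2})=o_p(n^{-1})$. The quadratic form $Q := (\beh(\Sih)-\be)^t V(\Sih)^{-1} (\beh(\Sih)-\be)$ is asymptotically $\chi_k^2$ with $k=2$, and its distribution is absolutely continuous in a neighbourhood of $x$; a uniform anti-concentration argument then shows that shifting the threshold by an $o_p(n^{-1})$ amount changes the coverage probability only by $o(n^{-1})$. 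Hence
\begin{equation*}
P(\beta\in{\rm CCI}_\alpha)=P\{Q\le x(1+h(\Sigma))\}+o(n^{-1}).
\end{equation*}

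Next I would apply Theorem \ref{thm:cp} with this deterministic $h=h(\Sigma)$, obtaining
\begin{equation*}
P\{Q\le x(1+h(\Sigma))\}=F_k(x)+hxf_k(x)+\left(\tfrac{B_1}{4}-\tfrac{B_2}{2}+2B_3\right)f_{k+2}(x)-\left(\tfrac{B_1}{4}+\tfrac{B_2}{2}\right)f_{k+4}(x)+O(n^{-3/2}).
\end{equation*}
Substituting the explicit form (\ref{h-func}) and using the chi-square density identities $f_{k+2}(x)/f_k(x)=x/k$ and $f_{k+4}(x)/f_k(x)=x^2/\{k(k+2)\}$, the contribution $hxf_k(x)$ exactly cancels the two bracketed correction terms, leaving $F_k(x)+O(n^{-3/2})$. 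With $k=2$ and $x=\chi_2^2(\alpha)$ one has $F_k(x)=1-\alpha$, so $P(\beta\in{\rm CCI}_\alpha)=1-\alpha+o(n^{-1})$, which is the claim.

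The main obstacle I anticipate is the rigorous justification of the first step, namely turning the $o_p(n^{-1})$ discrepancy between $h(\Sih)$ and $h(\Sigma)$ into an $o(n^{-1})$ discrepancy between the corresponding coverage probabilities. The algebraic cancellation in the second step is essentially by design. By contrast, the perturbation argument requires smoothness of the distribution function of $Q$ near the threshold together with a tail bound controlling the probability that $Q$ falls into a shrinking neighbourhood of $x(1+h(\Sigma))$; this is typically supplied by the same Edgeworth-type machinery that underlies Theorem \ref{thm:cp}, so I would invoke the regularity conditions implicit in that theorem rather than re-derive the expansion here.
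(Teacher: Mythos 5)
Your proposal is correct and follows essentially the same route as the paper: the paper likewise treats Theorem 2 as an immediate consequence of Theorem 1, noting that $h(\Sih)=h(\Sigma)+o_p(n^{-1})$ and that the choice (\ref{h-func}) makes the $O(n^{-1})$ terms cancel via $f_{k+2}(x)/f_k(x)=x/k$ and $f_{k+4}(x)/f_k(x)=x^2/\{k(k+2)\}$. Your additional care about converting the $o_p(n^{-1})$ threshold perturbation into an $o(n^{-1})$ change in coverage is a point the paper passes over silently, so if anything your argument is slightly more complete.
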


It is notable that the derived confidence region has analytical expressions, so that it does not require any computationally intensive methods such as bootstrap and Monte Carlo integration as used in \cite{Sugasawa2019} and \cite{Noma2019}.
For practical implementation, we need to obtain the expressions of $B_1$, $B_2$ and $B_3$ given in (\ref{B123}).
We here provide approximation formulas.
we can obtain $B_\ell^{\ast}, \ \ell=1,2,3$ which satisfy $B_\ell^{\ast}=B_\ell+o(n^{-1})$, where 
\begin{equation}\label{B-ast}
\begin{split}
B_1^{\ast}
=&
\frac2{n^2}\sum_{i=1}^n\sum_{j=1}^n\sum_{k=1}^n \tr\Big(VU_{jik}VU_{kij}\Big),\\
B_2^{\ast}
=&
\frac1{n^2} \sum_{i=1}^n\tr\Big(V\sum_{j=1}^nU_{jij}^2\Big)
+\frac1{n^2} \sum_{i=1}^n\sum_{j=1}^n\sum_{k=1}^n\tr^2\Big(VU_{ijk}\Big),\\
B_3^{\ast}
=&
B_2-\frac1{n^2}\sum_{i=1}^n\sum_{j=1}^n \tr\Big(VU_{iji}D_jD_i^{-1}\Big)\\
&-\frac1{n^2}\sum_{i=1}^n\sum_{j=1}^n \tr\Big(D_i^{-1}D_j\Big)\tr\Big(VU_{iji}\Big),
\end{split}
\end{equation}
for $U_{ijk}=D_i^{-1}D_jD_k^{-1}$ and $D_i=\Sigma+S_i$. 
The detailed derivation is given in the Supplementary Material.
Note that using $B_\ell^{\ast}$ instead of $B_\ell$ in the derived confidence region does not change the coverage accuracy shown in Theorem \ref{thm:ccr} since the difference between $B_\ell^{\ast}$ and $B_{\ell}$ is only $o(n^{-1})$.

\section{Numerical Studies}\label{sec:num}

\subsection{Simulation study}\label{sec:sim}
We carried out extensive simulation studies to assess the finite sample performance of the proposed confidence region (\ref{CCR}) together with the approximate confidence region (\ref{NCR}) by \cite{Reitsma2005}.
In this study, we do not consider possible competitors by \cite{Noma2019, Sugasawa2019} due to two reasons; the coverage performance has been already confirmed in their papers, and calculation of sizes of their confidence regions are so intensive that it is not feasible to repeatedly calculate them in our simulation study. 
Hence, the following simulation study is supposed to compare the performance of the proposed and standard methods, both of which have almost the same computational time.

In the model (\ref{BMA}), we set $\beta=(0,0)$ and $\Sigma_{11}=\Sigma_{22}=\tau^2$ and $\Sigma_{12}=\Sigma_{21}=\tau^2\rho$.
We considered 8 scenarios of the between study variances $\tau^2\in \{0.1, 0.2, \ldots, 0.8\}$ and 5 scenarios of the between study correlations $\rho\in \{0, 0.2, \ldots, 0.8\}$.
Following, \cite{Jackson2014}, for each simulation, two within-study variances $s_{Ai}$ and $s_{Bi}$ were simulated from a scaled chi-squared distribution with 1 degree of freedom, multiplied by 0.25, and truncated to lie within the interval $[0.009, 0.6]$, so the expected values of the variance is $0.20$. 
We changed the number of studies $n$ over 8,16 and 24, and set the nominal level $\alpha$ to $0.05$.
Based on 1000 replications, we evaluated empirical coverage probabilities of 95\% confidence regions of the true parameters vector $\beta$ obtained from the proposed corrected (CCR) method as well as the standard naive (NCR) method. 
For simplicity, we evaluated coverage rates assessing rejection rates of the test of null hypothesis for the true parameters.
Since areas of the corrected confidence region is approximately $1+h(\Sih)$ times larger than those of naive ones, we also computed median values of $h(\Sih)$ among 1000 replications.  
To see the degree of heterogeneity depending on $n$ and $\tau^2$, we computed heterogeneity measure given by $I^2=\tau^2/(Q+\tau^2)$ with $Q=(n-1)\sum_{i=1}^nw_i/\{(\sum_{i=1}^nw_i)^2-\sum_{i=1}^nw_i^2\}$ and $w_i=s_{Ai}^{-1}$ in each iteration, which were averaged over 1000 replications. 
Note that $I^2\in (0,1)$ and lager value of $I^2$ indicates more significant heterogeneity in the data.

The averaged values of $I^2$ are reported in Table \ref{tab:I2}, which indicates that our simulation scenarios contain a wide range of heterogeneity. 
The obtained coverage probabilities and the median values of $h(\Sih)$ are shown in Figures \ref{fig:cp} and \ref{fig:H}, respectively. 
From Figure \ref{fig:cp}, it is observed that the simulated coverage probabilities of the standard NCR seriously smaller than the nominal level ($95\%$), especially in the case with the small number of studies ($n=8$), possibly because of the naive approximation in (\ref{NCR}).
On the other hand, the proposed CCR provides considerably better performance than NCR as the coverage probabilities are relatively close to the nominal level. 
Although the coverage probability of CCR tend to be larger than the nominal level when $\tau$ is small and/or $\rho$ is large, such a conservative property would be much more desirable than the over-confident property that NCR shows.  
From Figure \ref{fig:H}, we can see that the area of CCR is much larger than that of NCR since CCR takes account of additional variability due to the estimation of the variance-covariance matrix, so it is quite reasonable that $h(\Sih)$ decreases as $n$ increases.
Moreover, we can also observe that the areas of CCR decreases as $\tau^2$ increases and increases as $\rho$ increases, which are consistent to the results of the overage probabilities shown in Figure \ref{fig:cp}.

\begin{table}[htp!]
\caption{Averaged values of the heterogeneity measure $I^2 (\%)$ based on 1000 replications.
\label{tab:I2}
}
\centering
\vspace{0.5cm}
\begin{tabular}{ccccccccccccccccccc}
\hline
 & & \multicolumn{8}{c}{$\tau^2$}\\
$n$ &  & 0.1 & 0.2 & 0.3 & 0.4 & 0.5 & 0.6 & 0.7 & 0.8\\
\hline
8 &  & 27.2 & 63.4 & 75.1 & 81.4 & 85.0 & 87.2 & 89.1 & 90.8 \\
16 &  & 49.0 & 74.9 & 83.1 & 87.4 & 89.7 & 91.6 & 92.7 & 93.7 \\
24 &  & 56.2 & 78.0 & 85.4 & 88.9 & 91.2 & 92.7 & 93.7 & 94.6 \\
\hline
\end{tabular}
\end{table}

\begin{figure}[htp!]
\centering
\includegraphics[width=12cm,clip]{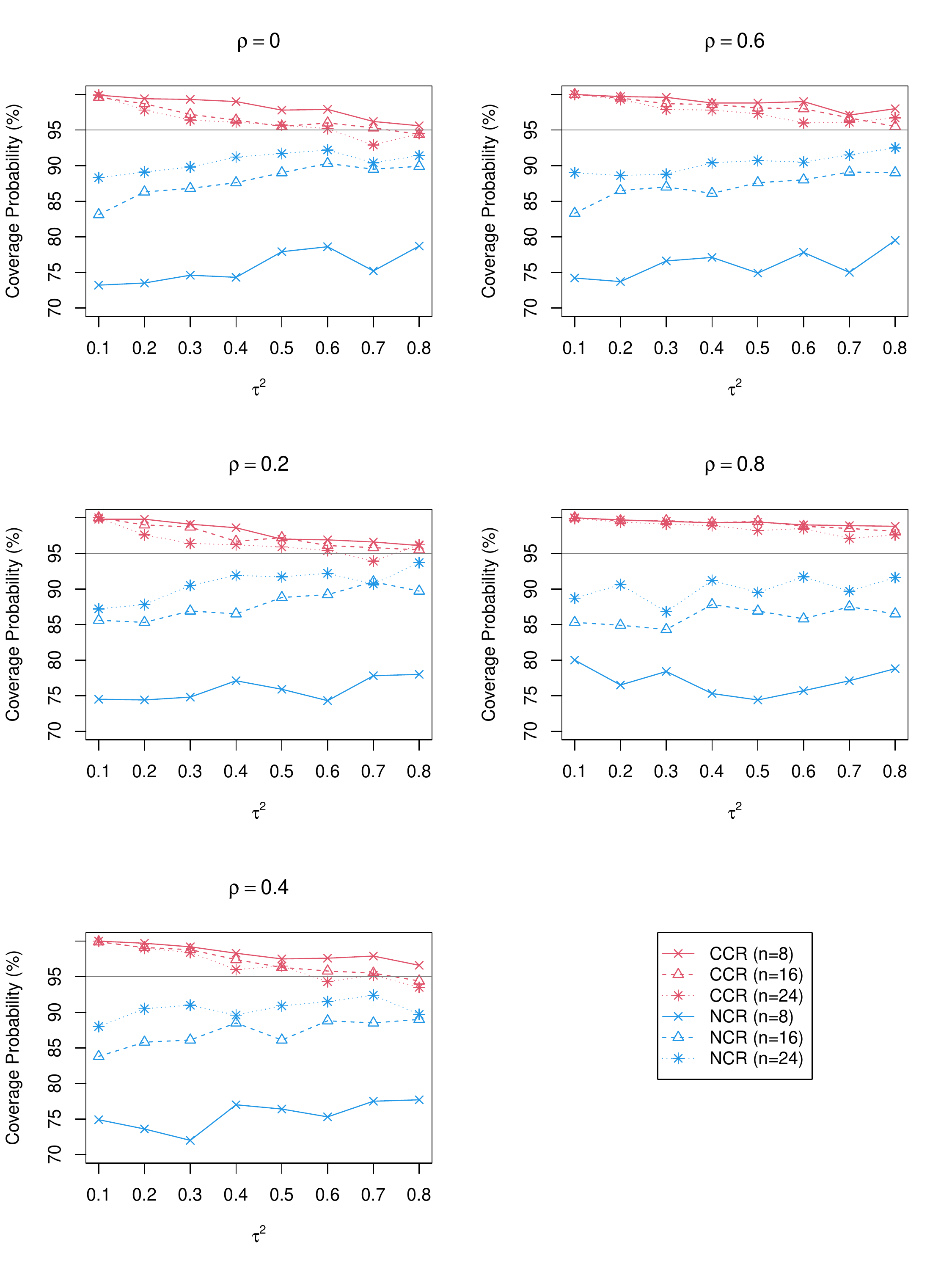}
\caption{The coverage probabilities of the proposed CCR and NCR based on 1000 replications under various combinations of $\tau^2, \rho$ and $n$.}
\label{fig:cp}
\end{figure}

\begin{figure}[htp!]
\centering
\includegraphics[width=12cm,clip]{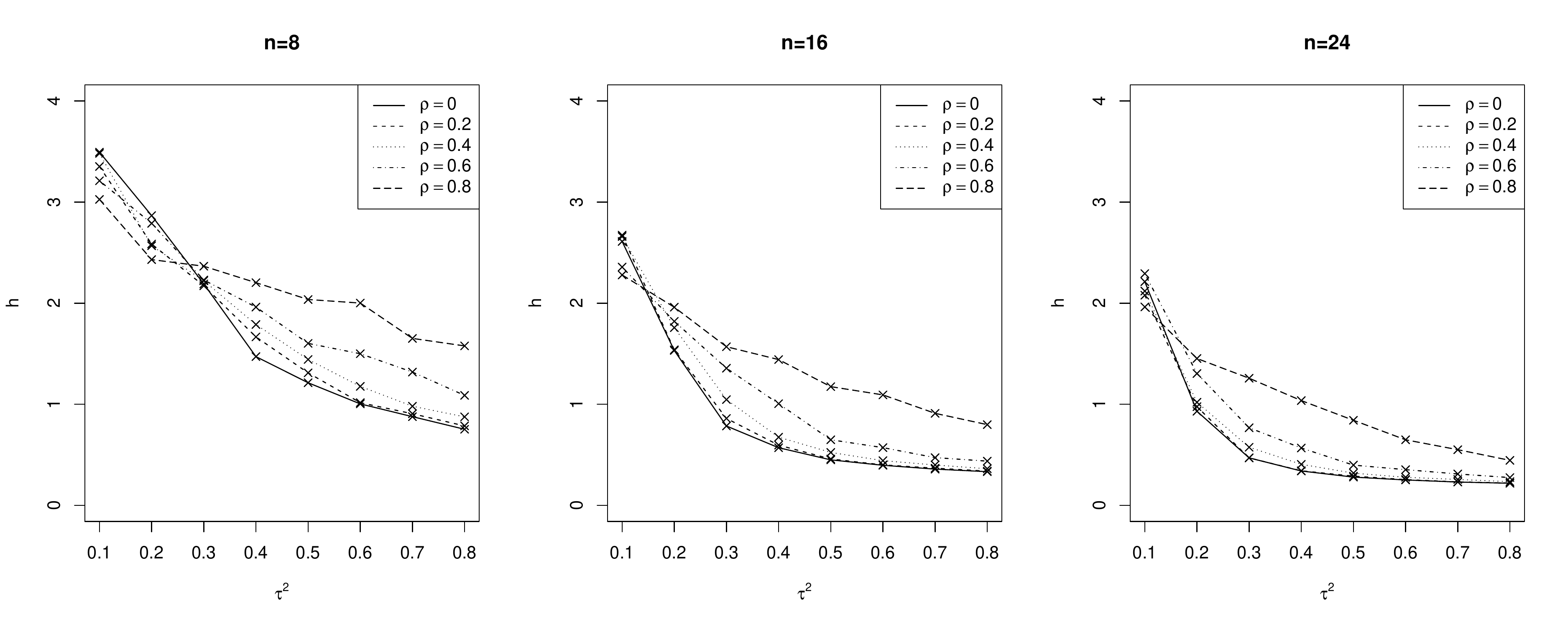}
\caption{The median values of $h(\Sih)$ in the proposed CCR based on 1000 replications under various combinations of $\tau^2, \rho$ and $n$. Note that $h(\Sih)=0$ means that the area of CCR is the same as that of NCR. 
}
\label{fig:H}
\end{figure}

\subsection{Example: screening test accuracy for alcohol problems}\label{sec:BMA-app}

Here we provide a re-analysis of the dataset given in \cite{Kriston2008}, including $n=14$ studies regarding a short screening test for alcohol problems.
Following \cite{Reitsma2005}, we used logit-transformed values of sensitivity and specificity, denoted by $y_{Ai}$ and $y_{Bi}$, respectively, and associated standard errors $s_{Ai}$ and $s_{Bi}$.
For the bivariate summary data, we first fitted the bivariate models (\ref{BMA}) using the restricted maximum likelihood method and found that $\widehat{\rho}=0.854$, and the heterogeneity measure $I^2$ for sensitivity and specificity are respectively given by $94.8\%$ and $98.9\%$, so there seems notable heterogeneity in the data. 
We then computed $95\%$ CRs of $\beta$ based on NCR (\ref{NCR}) given in \cite{Reitsma2005} and the proposed CCR.
Following \cite{Reitsma2005}, the obtained two CRs of $\beta$ were transformed to the scale $({\rm logit}(\beta_A),1-{\rm logit}(\beta_B))$, where ${\rm logit}(\beta_A)$ and $1-{\rm logit}(\beta_B)$ are the sensitivity and false positive rate, respectively. 
The obtained two CRs are presented in Figure \ref{fig:Audit} with a plot of the observed data, summary points $\beh$, and the summary receiver operating curve. 
The approximate CR is smaller than the proposed CR, which may indicate that the approximation method underestimates the variability of estimating nuisance variance parameters. 
In Figure \ref{fig:Audit}, we also reported the confidence region based on \cite{Sugasawa2019} using Monte Carlo simulation to compute accurate $p$-values of likelihood ratio statistics. 
The two regions based on the proposed method and \cite{Sugasawa2019} are slightly different but both are clearly wider than the naive confidence region. 
On the other hand, the computation time of the proposed method was less than 1 second while the inference method by \cite{Sugasawa2019} took more than 12 hours, where the program was run on a PC with a 3 GHz 8-Core Intel Xeon E5 8 Core Processor with approximately 16GB RAM.

\begin{figure}[h]
\centering
\includegraphics[width=11cm,clip]{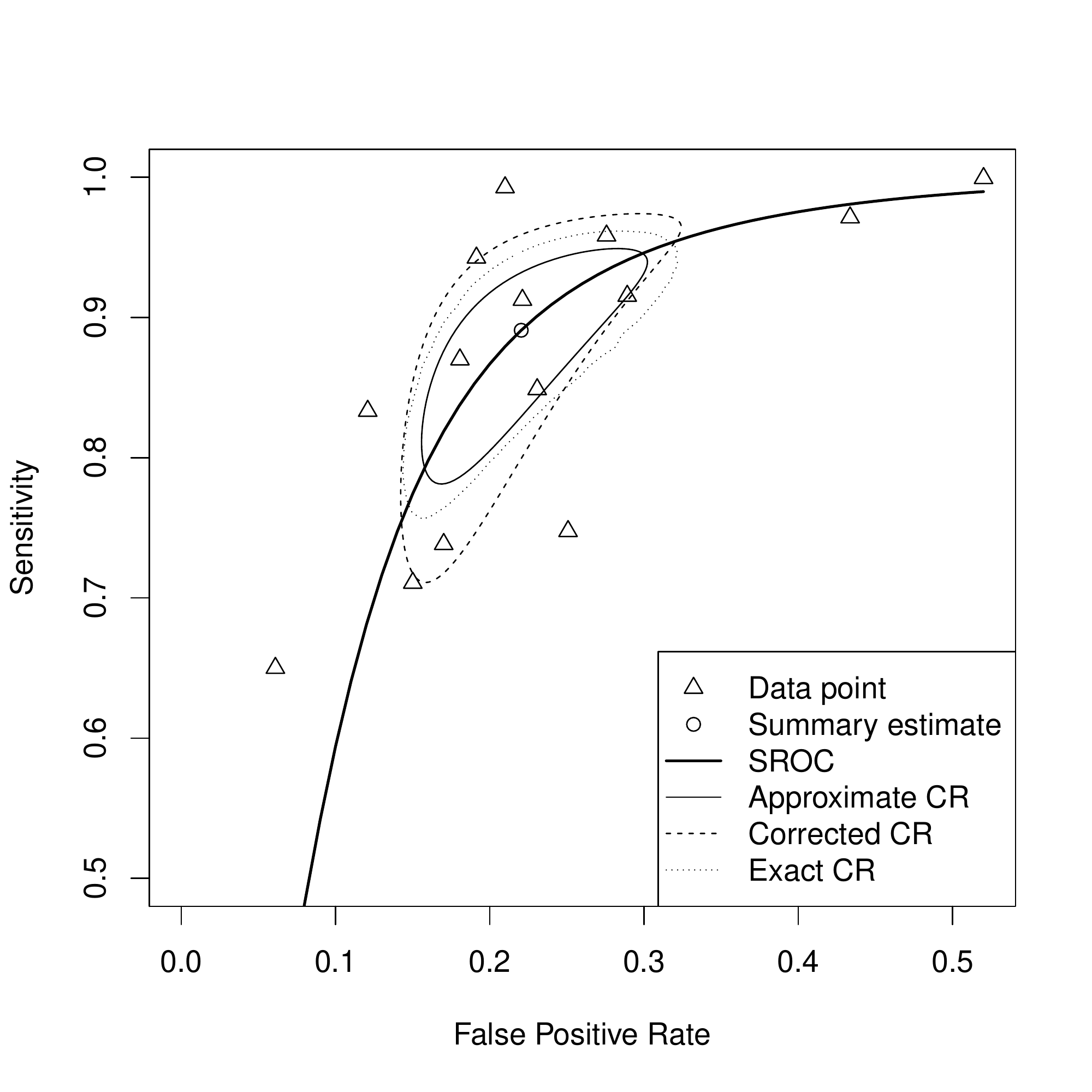}
\caption{The approximate and corrected CRs and summary receiver operating characteristics (SROC) curve.}
\label{fig:Audit}
\end{figure}

\section{Discussion}\label{sec:Disc}
In this paper, we presented an improved confidence region for random effects meta-analysis for diagnostic test accuracy without using repeated calculations such as Monte Carlo or Bootstrap methods. 
The proposed confidence region has relatively simple form and they are shown to have second order accurate coverage probability while the standard inference methods (e.g. REML) have significant coverage errors.
In simulation studies, we demonstrated that possible under-coverage properties of the standard methods under the small number of studies to be synthesized while the proposed method provides reasonable coverage properties.

A possible limitation of the proposed method might be that the coverage accuracy still depends on the number of studies. 
On the other hand, inference methods that does not rely on large sample approximation have been recently proposed \citep[e.g.][]{Noma2019, Sugasawa2019}, which are computationally intensive, so they would not be necessarily practical.
Then, the proposed method would be regarded as a reasonable compromise between methods with exact empirical coverage and computational efficiency.

\section*{Acknowledgments}
This research was supported by Japan Society of Promotion of Science KAKENHI (grant number: 18K12757).

\bibliographystyle{chicago}
\bibliography{refs}

\newpage

\begin{center}
{\LARGE \bf Supplementary Material for ``Improved Confidence Regions in Meta-analysis of Diagnostic Test Accuracy"}

\medskip
\begin{center}
Tsubasa Ito$^1$ and Shonosuke Sugasawa$^2$
\end{center}

\noindent
$^1$Department of Statistical Thinking, The Institute of Statistical Mathematics\\
$^2$Center for Spatial Information Science, The University of Tokyo
\end{center}

\setcounter{equation}{0}
\renewcommand{\theequation}{S\arabic{equation}}
\setcounter{section}{0}
\renewcommand{\thesection}{S\arabic{section}}
\setcounter{lem}{0}
\renewcommand{\thelem}{S\arabic{lem}}
\setcounter{page}{1}

\vspace{5mm}
This supplementary material provides the proofs and the detailed derivations of Theorem 1, Theorem 2, equation (6).
In what follows, we denote $\Si_n=I_n\otimes \Si$ and write $\beh(\Sigma)$ as $\bet$, $\beh(\Sih)$ as $\beh$, $\Si_n(\psi)$ as $\Si_n$, $V(\Sigma)$ as $V$ and $V(\Sih)$ as $\Vh$, for notational simplicity.

\section{Key lemmas}

We first introduce lemmas which play important roles in the proof of Theorems \ref{thm:cp}.
The first lemma is used for deriving the conditional distribution of $\beh$.
\begin{lem}\label{lem:1}
Under the conditions {\rm (C1)}-{\rm (C3)} given in the main document, $\bet$ is independent of $Py$ for $P=I_n-X(X^tX)^{-1}X^t$.
Also, $\beh-\bet$ is a function of $Py$, and independent of $\bet$.
\end{lem}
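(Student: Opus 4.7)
The plan is to prove the two assertions in a single stroke by exploiting Gaussianity together with the algebra of unbiased linear projections. I would first establish that $\bet$ and $Py$ are independent, and then show that $\beh - \bet$ is a measurable function of $Py$; the independence of $\beh - \bet$ from $\bet$ then follows automatically from these two facts.

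For the first step, since $y \sim N_{2n}(X\be, I_n\otimes \Si + S)$ and both $\bet = \{X^t(I_n\otimes\Si+S)^{-1}X\}^{-1}X^t(I_n\otimes\Si+S)^{-1}y$ and $Py$ are linear in $y$, they are jointly Gaussian, and independence reduces to a vanishing cross-covariance. A direct calculation gives
\[
{\rm Cov}(\bet,Py) = \{X^t(I_n\otimes\Si+S)^{-1}X\}^{-1} X^t (I_n\otimes\Si+S)^{-1}(I_n\otimes\Si+S)P = \{X^t(I_n\otimes\Si+S)^{-1}X\}^{-1} X^t P,
\]
which is zero because $X^tP = 0$ by the definition of the OLS residual projection.

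For the second step, I would write $A(M) := \{X^t(I_n\otimes M+S)^{-1}X\}^{-1}X^t(I_n\otimes M+S)^{-1}$ for a generic positive-definite $2\times 2$ matrix $M$, so that $\bet = A(\Si)y$ and $\beh = A(\Sih)y$. The key algebraic identity I plan to use is $A(M)X = I_2$ for every such $M$, i.e., any GLS estimator is unbiased regardless of which weight matrix is chosen. Combining this with the decomposition $y = X \beh^{\rm OLS} + Py$ yields
\[
\beh - \bet = \{A(\Sih) - A(\Si)\}(X \beh^{\rm OLS} + Py) = \{A(\Sih) - A(\Si)\}Py,
\]
since the $\beh^{\rm OLS}$ term cancels via $A(\Sih)X\beh^{\rm OLS} = A(\Si)X\beh^{\rm OLS} = \beh^{\rm OLS}$. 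By condition (C3), $\Sih$---and hence $A(\Sih)$---depends on $y$ only through $Py$, so the right-hand side is a function of $Py$ alone.

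Combining the two steps, $\beh - \bet$ is a measurable function of $Py$ while $\bet$ is independent of $Py$, whence $\beh - \bet$ is independent of $\bet$. The only genuinely non-obvious move I anticipate is the cancellation in the second step, which rests on $A(M)X = I_2$ holding for arbitrary $M$; everything else is routine covariance bookkeeping for Gaussian vectors. In particular, conditions (C1) and (C2) appear not to be needed for this lemma---only the residual-based structure encoded in (C3) is essential.
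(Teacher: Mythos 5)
Your proof is correct. The first assertion (independence of $\bet$ and $Py$) is handled exactly as in the paper: both are linear in the Gaussian $y$, and the cross-covariance vanishes because $X^tP=0$. For the second assertion, however, you take a genuinely different and more explicit route. The paper argues by invariance: it shows that $\beh-\bet$ is unchanged under translations $y\mapsto y+XT$, notes that $Py$ is a maximal invariant for this group, and then invokes a theorem from Berger (1985) to conclude that $\beh-\bet$ must be a function of $Py$. You instead exhibit the function directly: writing $A(M)=\{X^t(I_n\otimes M+S)^{-1}X\}^{-1}X^t(I_n\otimes M+S)^{-1}$ and using $A(M)X=I_2$ together with $y=X\beh^{\rm OLS}+Py$, you obtain the closed form $\beh-\bet=\{A(\Sih)-A(\Si)\}Py$, with (C3) guaranteeing that $A(\Sih)$ itself depends on $y$ only through $Py$. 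Your version is self-contained (no appeal to the maximal-invariant theorem), and the explicit representation of $\beh-\bet$ is potentially useful downstream; the paper's invariance argument is more abstract but would survive for estimators of $\be$ that are translation-equivariant without being of plug-in GLS form. Your closing remark is also consistent with the paper: although the lemma is stated under (C1)--(C3), the proof really only uses normality and (C3) (the paper's own invariance step likewise cites only (C3)).
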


\begin{proof}
Let $\ept=y-X\be$, which is distributed as $N(0, \Sigma_n+S)$. 
Since $\bet-\be=VX^t(\Sigma_n+S)^{-1}\ept$, it holds that 
\begin{align*}
\E[Py(\bet-\be)^t] V^{-1} 
&=P\E[\ept\ept^t](\Si_n+S)^{-1}X=PX=0.
\end{align*}
Since $V$ is a full-rank matrix, we have $\E[Py(\bet-\be)^t]=0$, that is the covariance of $Py$ and $\bet$ is $0$, which implies that $\bet$ is independent of $Py$ from the normality assumption.
Now, we write $\bet$ as $\bet(\Si,y)$ and $\beh$ as $\beh(\Sih(y),y)$.
Since $\bet(\Si,y+XT)=\bet(\Si,y)+T$ and $\beh(\Sih(y+XT),y+XT)=\beh(\Sih(y),y)+T$ from (C3), we have
\begin{align*}
\beh(\Sih(y+XT),y+XT)-\bet(\Si,y+XT)
=\beh(\Sih(y),y)-\bet(\Si,y),
\end{align*}
which implies that $\beh-\bet$ is invariance with respect to the translation $y\to y+XT$.
Moreover, $Py$ is maximal invariant with respect to the translation $y\to y+XT$ since $P(y+XT)=Py$ and $Py_1=Py_2$ implies that $y_1=y_2+XT'$ for $T'=(X^tX)^{-1}X^t(y_1-y_2)$.
Then, $\beh-\bet$ is a function of $Py$ from Theorem 2 in \cite{Berger1985}, p.403.
\end{proof}

In the next lemma, we show the first order bias of the plug-in estimator $\Vh$ is approximately the same as the negative covariance of $\beh-\bet$.
\begin{lem}\label{lem:3}
Under the conditions {\rm (C1)}-{\rm (C3)}, it holds that
\begin{align*}
\E[\Vh]-V=-\E[(\beh-\bet)(\beh-\bet)^t]+O(n^{-5/2}).
\end{align*}
\end{lem}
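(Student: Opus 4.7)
The plan is to combine Lemma~\ref{lem:1} with second-order Taylor expansions of $\Vh-V$ and $\beh-\bet$ in $\Delta := \Sih-\Sigma = O_p(n^{-1/2})$, then match the two resulting $O(n^{-2})$ leading terms via the identity $V(\Sigma) = a(\Sigma)\,W(\Sigma)\,a(\Sigma)^t$, where $a(\Sigma) := V(\Sigma)X^tW(\Sigma)^{-1}$ and $W(\Sigma) := I_n\otimes\Sigma + S$, together with the differential consequence $a'[\Delta]X = 0$ of the GLS identity $a(\Sigma)X = I_2$.

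For the $\beh-\bet$ side, I use that $a(\Sih)X = a(\Sigma)X = I_2$ to cancel the $X\beta$ part of $y$ in $\beh-\bet = [a(\Sih)-a(\Sigma)]y$, leaving $\beh-\bet = [a(\Sih)-a(\Sigma)]\varepsilon$ with $\varepsilon := y-X\beta$. Differentiating $aX = I_2$ yields $a'[\Delta]X = 0$, hence $a'[\Delta]\varepsilon = a'[\Delta]P\varepsilon$, and first-order Taylor expansion gives $\beh-\bet = a'[\Delta]P\varepsilon + O_p(n^{-3/2})$. Consequently,
\[
\E\big[(\beh-\bet)(\beh-\bet)^t\big] = \E\big[a'[\Delta]\,P\varepsilon\, P\varepsilon^t\,(a'[\Delta])^t\big] + O(n^{-5/2}).
\]

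For the $\Vh$ side, I Taylor-expand $V(\Sih)$ around $\Sigma$. The first-order term has expectation $V'[\E\Delta] = o(n^{-2})$ by (C2) combined with $V' = O(n^{-1})$. Since the bias-corrected moment estimator $\Sih$ is a quadratic form in the Gaussian vector $P\varepsilon$, its cumulants satisfy $\mathrm{cum}_k(\Delta) = O(n^{1-k})$ for $k\ge 2$, so all Taylor remainders beyond second order contribute at most $O(n^{-3})$ to the expectation, yielding
\[
\E[\Vh]-V = \tfrac{1}{2}\E\big[V''[\Delta,\Delta]\big] + O(n^{-5/2}).
\]

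To match, I differentiate $V = aWa^t$ twice. Using $a''[\Delta,\Delta]X = 0$ (so $a''[\Delta,\Delta]Wa^t = a''[\Delta,\Delta]XV = 0$) and $W''=0$ (since $W$ is linear in $\Sigma$), the nine terms in the second derivative of $aWa^t$ collapse to
\[
V''[\Delta,\Delta] = 2\big\{a'[\Delta]\,W\,(a'[\Delta])^t + a'[\Delta](I_n\otimes\Delta)\,a^t + a\,(I_n\otimes\Delta)(a'[\Delta])^t\big\}.
\]
Evaluating $\E[a'[\Delta]P\varepsilon P\varepsilon^t(a'[\Delta])^t]$ via Wick's theorem on the Gaussian $P\varepsilon$ (noting $\Delta$ is itself quadratic in $P\varepsilon$), the diagonal pairing produces $\E[a'[\Delta]\,PWP\,(a'[\Delta])^t] = \E[a'[\Delta]W(a'[\Delta])^t]$ (the last equality using $a'[\Delta]=a'[\Delta]P$), while the Wick cross-pairings produce precisely the expectations of the two mixed $(I_n\otimes\Delta)$ pieces above but with the opposite sign. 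This gives $\E[a'[\Delta]P\varepsilon P\varepsilon^t(a'[\Delta])^t] = -\tfrac{1}{2}\E[V''[\Delta,\Delta]] + O(n^{-5/2})$, which combined with the two expansions above establishes the lemma.

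The main obstacle is this final matching step. The dependence between $a'[\Delta]$ and $P\varepsilon$, both functions of the same Gaussian vector, prevents naive factoring of $\E[a'[\Delta]P\varepsilon P\varepsilon^t(a'[\Delta])^t]$ as $\E[a'[\Delta]PWP(a'[\Delta])^t]$ and forces careful bookkeeping of Wick cross-pairings at order $O(n^{-2})$, which must be matched sign-for-sign against the $W'[\Delta] = I_n\otimes\Delta$ cross terms in $V''[\Delta,\Delta]$. Condition (C3), that $\Sih$ depends only on $Py$, is essential here, since it underwrites the clean GLS decomposition and the $a'[\Delta]X=0$ identity that organize the cancellation.
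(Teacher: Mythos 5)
Your overall route---expanding both $\E[\Vh]-V$ and $\E[(\beh-\bet)(\beh-\bet)^t]$ to second order in $\Delta=\Sih-\Si$ and matching the $O(n^{-2})$ terms, with (C3) and the independence of $\bet$ from $Py$ supplying the structure---is essentially the paper's, which carries it out via the resolvent identity $A^{-1}-B^{-1}=-A^{-1}(A-B)B^{-1}$ and a two-term decomposition $\beh-\bet=J_1+J_2$ rather than your $a(\Si)$-derivative formalism. The reductions $\beh-\bet=a'[\Delta]P\varepsilon+O_p(n^{-3/2})$, the use of (C2) to remove the first-order bias term, and the cumulant bound on the Taylor remainder are all fine. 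The problem is the final matching step, whose stated mechanism is wrong and internally inconsistent with the conclusion you draw from it.

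Write $W=I_n\otimes\Si+S$, $a=VX^tW^{-1}$, $\Pi=I-Xa$. Since $a'[\Delta]=-a(I_n\otimes\Delta)W^{-1}\Pi$ and $\Pi W\Pi^t=W-XVX^t$, one has the \emph{deterministic} identities
\[
a'[\Delta]\,(I_n\otimes\Delta)\,a^t \;=\; a\,(I_n\otimes\Delta)\,(a'[\Delta])^t \;=\; -\,a'[\Delta]\,W\,(a'[\Delta])^t ,
\]
so your expression for $V''$ collapses to $\tfrac12V''[\Delta,\Delta]=-a'[\Delta]W(a'[\Delta])^t$ exactly: the two mixed $(I_n\otimes\Delta)$ pieces are not independent quantities waiting to be generated by Wick cross-pairings. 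Hence the diagonal pairing alone, $\E[a'[\Delta]W(a'[\Delta])^t]$, already equals $-\tfrac12\E[V''[\Delta,\Delta]]$, and what remains to be proved is that the cross-pairings are \emph{negligible}, i.e.\ $O(n^{-5/2})$ (they are: every surviving cross-pairing forces all block indices in the sums to coincide and carries an extra factor $n^{-1}$, so those terms are $O(n^{-3})$). If the cross-pairings really contributed $-\E[a'[\Delta](I_n\otimes\Delta)a^t]-\E[a(I_n\otimes\Delta)(a'[\Delta])^t]=+2\,\E[a'[\Delta]W(a'[\Delta])^t]$ as you assert, your total would be $3\,\E[a'[\Delta]W(a'[\Delta])^t]=-\tfrac32\E[V''[\Delta,\Delta]]$, not $-\tfrac12\E[V''[\Delta,\Delta]]$, and the lemma would fail by a factor of $3$. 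To repair the proof, record the identity above (equivalently, reduce $\tfrac12\E[V'']$ to $\E[VR\Delta_nR^tVR\Delta_nR^tV]-\E[VR\Delta_nW^{-1}\Delta_nR^tV]$ with $R=X^tW^{-1}$ and $\Delta_n=I_n\otimes\Delta$, which is exactly the paper's equation for $\E[\Vh-V]$) and replace the cross-pairing claim by an order estimate showing those contributions are $O(n^{-5/2})$.
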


\begin{proof}
We will show the Lemma by directly comparing both sides of the equation in the Lemma.
Noting that $V=\{X^t(\Si_n+S)X\}^{-1}$ and $A^{-1}-B^{-1}=-A^{-1}(A-B)B^{-1}$ for some non-singular matrices $A$ and $B$, we have
\begin{align*}
\Vh-V=&
-\Vh X^t\{(\Sih_n+S)^{-1}-(\Si_n+S)^{-1}\}XV\\
=&\Vh X^t(\Sih_n+S)^{-1}(\Sih_n-\Si_n)(\Si_n+S)^{-1}XV\\
=&V X^t(\Sih_n+S)^{-1}(\Sih_n-\Si_n)(\Si_n+S)^{-1}XV\\
&+(\Vh-V)X^t(\Sih_n+S)^{-1}(\Sih_n-\Si_n)(\Si_n+S)^{-1}XV\\
\equiv &I_1+I_2.
\end{align*}
Since $V=O(n^{-1})$ and $\Vh-V=O_p(n^{-1/2})$ from the condition (C2), we have 
\begin{align*}
I_1&=V X^t(\Si_n+S)^{-1}(\Sih_n-\Si_n)(\Si_n+S)^{-1}XV\\
&-V X^t(\Si_n+S)^{-1}(\Sih_n-\Si_n)(\Si_n+S)^{-1}(\Sih_n-\Si_n)(\Si_n+S)^{-1}XV\\
&+O_p(n^{-5/2}),
\end{align*}
and
\begin{align*}
I_2=&V X^t(\Si_n+S)^{-1}(\Sih_n-\Si_n)(\Si_n+S)^{-1}XVX^t\\
&\times(\Si_n+S)^{-1}(\Sih_n-\Si_n)(\Si_n+S)^{-1}XV+O_p(n^{-5/2}).
\end{align*}
Then, for $R=X^t(\Si_n+S)^{-1}$ we have
\begin{align}
\E[&\Vh-V]\nonumber\\
=&
\E[VR(\Sih_n-\Si_n)R^tV+VR(\Sih_n-\Si_n)R^tVR(\Sih_n-\Si_n)R^tV\nonumber\\
&\ \ -VR(\Sih_n-\Si_n)(\Si_n+S)^{-1}(\Sih_n-\Si_n)R^tV]+O(n^{-5/2})\nonumber\\
\begin{split}
=&\E[VR(\Sih_n-\Si_n)R^tVR(\Sih_n-\Si_n)R^tV\\
&\ \ \ -VR(\Sih_n-\Si_n)(\Si_n+S)^{-1}(\Sih_n-\Si_n)R^tV]+O(n^{-5/2}),
\end{split}
\label{eqn:eh}
\end{align}
where the last equality holds since $\Sih$ is a second-order unbiased estimator of $\Si$.

Next, we evaluate the first term of the right side of the equation in the Lemma.
We can write $\beh-\bet$ as
\begin{align*}
\beh-\bet
=& (\Vh-V)X^t(\Sih_n +S)^{-1}(y-X\be)
\\
&+V X^t\{(\Sih_n +S)^{-1}-(\Si_n +S)^{-1}\}(y-X\be)\\
=&J_1+J_2.
\end{align*}
In order to approximate the covariance of $\beh-\bet$ up to the order $O(n^{-5/2})$, we expand $J_1$ and $J_2$ as
\begin{align*}
J_1=&
VR(\Sih_n-\Si_n)R^tV X^t(\Si_n +S)^{-1}(y-X\be)+O_p(n^{-1}), \\
J_2=&
-VR(\Sih_n-\Si_n)(\Si_n +S)^{-1}(y-X\be)+O_p(n^{-1}).
\end{align*}
The straightforward calculation shows that
\begin{align*}
\E[J_1J_1^t]&=\E[VR(\Sih_n-\Si_n)R^tV R(\Sih_n-\Si_n)R^tV]+O(n^{-5/2}),\\
\E[J_2J_2^t]&=E[VR(\Sih_n-\Si_n)(\Si_n +S)^{-1}(\Sih_n-\Si_n)R^tV]+O(n^{-5/2}),\\
\E[J_1J_2^t]&=E[J_2J_1^t]=-\E[VR(\Sih_n-\Si_n)R^tVR(\Sih_n-\Si_n)R^tV]+O(n^{-5/2}),
\end{align*}
thereby we have
\begin{align*}
-\E[&(\beh-\bet)(\beh-\bet)^t]\\
=&\E[VR(\Sih_n-\Si_n)R^tVR(\Sih_n-\Si_n)\}R^tV]\\
&-\E[VR(\Sih_n-\Si_n)(\Si_n +S)^{-1}(\Sih_n-\Si_N)R^tV]+O(n^{-5/2}),
\end{align*}
which has the same expression as (\ref{eqn:eh}).
\end{proof}

\section{Proof of Theorem 1}
From Lemma \ref{lem:1}, the conditional distribution of $\beh-\be$ given $Py$  is $N_2 (\beh-\bet, V)$.
Let $w=V^{-1/2}\{(\beh-\be)-(\beh-\bet)\}$.
It is noted that $\Vh-V=O_p(n^{-3/2})$.
Then, the conditional distribution of $w$ given $Py$ is $w \sim N_k (0, I_k)$, and the Mahalanobis' distance is approximated via Taylor series expansion as
\begin{align}
&(\beh-\be)^t\Vh^{-1}(\beh-\be)\nonumber\\
=&
w^tV^{1/2}\Vh^{-1}V^{1/2}w
+2(\beh-\be)^t\Vh^{-1}V^{1/2}w+(\beh-\be)^tV^{-1}(\beh-\bet)\nonumber\\
=&
w^t\Big[I_k-V^{-1/2}(\Vh-V)V^{-1/2}+V^{-1/2}(\Vh-V)V^{-1}(\Vh-V)V^{-1/2}\Big]w\nonumber\\
&+2(\beh-\bet)^t\Vh^{-1}V^{1/2}w+(\beh-\bet)^t\Vh^{-1}(\beh-\bet)
+O_p(n^{-3/2})\nonumber\\
=&
w^t(I_k-G_1)w+2g_2^tw+g_3+O_p(n^{-3/2}),
\label{eqn:mdap}
\end{align}
where 
\begin{align*}
G_1=&
V^{-1/2}(\Vh-V)V^{-1/2}
-V^{-1/2}(\Vh-V)V^{-1}(\Vh-V)V^{-1/2},\\
g_2=&V^{1/2}\Vh^{-1}(\beh-\bet),\\
g_3=&(\beh-\bet)^t\Vh^{-1}(\beh-\bet).
\end{align*}
From (\ref{eqn:mdap}), the characteristic function $\varphi(t)=\E[\exp\{it (\beh-\be)^t\Vh^{-1}(\beh-\be)\}]$ is approximated as 
\begin{align*}
\varphi(t)=&\E\exp\Big(it\{w^t(I_k-G_1)w+2g_2^tw+g_3\}\Big)+O(n^{-3/2})\\
=&
\E\Big[e^{itw^tw} \Big\{ 1+it \{-w^tG_1w+2g_2^tw+g_3\}\\
&-\frac{t^2}2  \{-w^tG_1w+2g_2^tw+g_3\}^2\Big\} \Big]+O(n^{-3/2})\\
=&
\E\Big[e^{itw^tw} \Big\{ 1+it \{-w^tG_1w+2g_2^tw+g_3\}
\\
&-\frac{t^2}2  \{(w^tG_1w)^2+4w^tg_2g_2^tw-4w^tG_1wg_2^tw\}
\Big\} \Big]+O(n^{-3/2}),
\end{align*}
because $G_1=O_p(n^{-1/2})$, $g_2=O_p(n^{-1/2})$ and $g_3=O_p(n^{-1})$.
From the law of iterated expectations and the conditional normality of $w$, the above equation reduces to
\begin{align*}
\varphi(t)=&
\E\Big[e^{itw^tw} \Big\{ 1+it \{-w^tG_1w+g_3\}\\
&
\qquad\qquad-\frac{t^2}2  \{(w^tG_1w)^2
+4w^tg_2g_2^tw\}\Big\} \Big]
+O(n^{-3/2}).
\end{align*}
For some deterministic matrix $A$ and $w \sim N_k (0, I_k)$, it holds that
\begin{align*}
\E\Big[e^{itw^tw} w^t A w \Big]=&
(2\pi)^{-k/2}\int \exp\Big(-\frac{(1-2it)w^tw}2\Big)w^t A w dw\\
=&
(1-2it)^{-k/2-1} \tr (A),\\
\E\Big[e^{itw^tw} (w^t A w)^2 \Big]=&
(2\pi)^{-k/2}\int \exp\Big(-\frac{(1-2it)w^tw}2\Big)(w^t A w)^2 dw\\
=&
(1-2it)^{-k/2-2} (\tr ^2(A)+2\tr (A^2)).
\end{align*}
Using these equalities, from the law of iterated expectations, we have
\begin{align*}
\varphi(t)=&
(1-2it)^{-k/2} \Big[ 1+it \Big\{-(1-2it)^{-1}\tr(\E[G_1])+E[g_3]\Big\}\\
&\qquad+\frac{(it)^2}2 \Big\{(1-2it)^{-2} \{\E[\tr^2(G_1)]+2\tr(\E[G_1^2]) \}
\\
&\qquad\qquad+(1-2it)^{-1}4\tr(\E[g_2g_2^t])\Big\}
\Big]+O(n^{-3/2}).
\end{align*}
For notational simplicity, let $J=\E[\tr^2(G_1)]+2\tr(\E[G_1^2])$.
Let $s= (1-2it)^{-1}$, or $it=(s-1)/(2s)$.
Then,  $(1-2it)^{k/2}\varphi(t)-1$ can be written as
\begin{align}
&it \Big\{-(1-2it)^{-1}\tr(E[G_1])+E[g_3]\Big\}+{(it)^2 \over 2}  \Big\{(1-2it)^{-2} J 
+(1-2it)^{-1}4E[g_2^tg_2]\Big\}\nonumber\\
\begin{split}
=&
\frac1{2s}\Big\{E[g_2^tg_2]-E[g_3]\Big\}+ \Big\{\frac12\tr(E[G_1])+\frac12E[g_3] +\frac J8- E[g_2^tg_2]\Big\}
\\
&+\Big\{ -\frac12\tr(E[G_1])-\frac J4
+\frac12E[g_2^tg_2]\Big\}s
+\frac J8 s^2
\label{eqn:pol}
\end{split}
\end{align}

We shall evaluate the moments in (\ref{eqn:pol}).
First, $G_1$ can be expressed as 
\begin{align}
\begin{split}
G_1
=&
V^{-1/2}(\Vh-V)V^{-1/2}-V^{-1/2}(\Vh-V)V^{-1}(\Vh-V)V^{-1/2},
\end{split}
\label{eqn:G12ap}
\end{align}
thereby it holds that 
\begin{align}
\tr(E[G_1])= \tr(E[K])-\tr(\E[K^2]),
\label{eqn:pcr1}
\end{align}
for $K=V^{-1/2}(\Vh-V)V^{-1/2}$.
Noting that the first term in (\ref{eqn:G12ap}) is $O_p(n^{-1/2})$ and the second term is $O(n^{-1})$, we can expand $G_1^2$ and $\tr^2(G_1)$ as
\begin{align*}
&G_1^2
=
V^{-1/2}(\Vh-V)V^{-1}(\Vh-V)V^{-1/2}+O_p(n^{-3/2}),\\
&\tr^2(G_1)
=
\tr^2(V^{-1/2}(\Vh-V)V^{-1/2})+O_p(n^{-3/2}),
\end{align*}
which lead to $\E[G_1^2]=\E[K^2] +O(n^{-3/2})$ and $\E[\tr^2(G_1)]=\E[\tr^2(K)]+O(n^{-3/2})$.
Thus,
\begin{equation}
J=\E[\tr^2(K)]+2\tr(\E[K^2])+O(n^{-3/2}).
\label{eqn:pcr2}
\end{equation}
It can be also observed that
\begin{align*}
&g_2^tg_2
=
(\beh-\bet)^t\Vh^{-1}V\Vh^{-1}(\beh-\bet)
=
(\beh-\bet)^tV^{-1}(\beh-\bet)+O_p(n^{-3/2}),\\
&g_3
=(\beh-\bet)^t\Vh^{-1}(\beh-\bet)
=
(\beh-\bet)^tV^{-1}(\beh-\bet)+O_p(n^{-3/2}).
\end{align*}
Then, from Lemma \ref{lem:3} we have
\begin{equation}
E[g_2^tg_2]=E[g_3]
=-\tr(E[K])+O(n^{-3/2}).
\label{eqn:pcr3}
\end{equation}

Combining (\ref{eqn:pcr1}), (\ref{eqn:pcr2}) and (\ref{eqn:pcr3}), we can see that the characteristic function of $(\beh-\bet)^t\Vh^{-1}(\beh-\bet)$ can be written as
\begin{align*}
\varphi(t)=&(1-2it)^{-k/2}\{1+B_1/8-B_2/4+B_3+(-B_1/4-B_3)s+(B_1/8+B_2/4)s^2\}\\
&+O(n^{-3/2}),
\end{align*}
for $B_1, B_2$ and $B_3$ are defined in the main document.
From the fact that the characteristic function of the chi-squared distribution with degrees of freedom $k+2h$ is given by $(1-2it)^{-k/2-h}=(1-2it)^{-k/2}s^h$, it follows that the asymptotic expansion of the cumulative distribution function of $(\beh-\be)^t\Vh^{-1}(\beh-\be)$ is 
\begin{align*}
F_{k}&(x)+(B_1/8-B_2/4+B_3)F_{k}(x)\\
&+(-B_1/4-B_3)F_{k+2}(x)+(B_1/8+B_2/4)F_{k+4}(x)+O(n^{-3/2}),
\end{align*}
where $F_{k}(x)$ is the cumulative distribution function of the chi-squared distribution with degrees of freedom $k$.
Note that $F_{k+r-2}(x)-F_{k+r}(x)=2f_{k+r}(x)$, where $f_k(x)$ is the density function of the chi-squared distribution with degrees of freedom $k$.
Then, it holds that 
\begin{equation*}
\begin{split}
P(&(\beh-\be)^t\Vh^{-1}(\beh-\be)\leq x)\\
=&F_{k}(x)+2\left(\frac{B_1}{8}-\frac{B_2}{4}+B_3\right)f_{k+2}(x)-\left(\frac{B_1}{4}+\frac{B_2}{2}\right)f_{k+4}(x)+O(n^{-3/2}),
\end{split}
\end{equation*}
thereby, for a function $h=h(\Si)$ with order $O(n^{-1})$, we have
\begin{align*}
&P\{(\beh-\be)^t\Vh^{-1}(\beh-\be)\leq x(1+h)\}\\
=&F_{k}(x)+hxf_k(x)+\left(\frac{B_1}{4}-\frac{B_2}{2}+2B_3\right)f_{k+2}(x)-\left(\frac{B_1}{4}+\frac{B_2}{2}\right)f_{k+4}(x)+O(n^{-3/2}),
\end{align*}
which completes the proof.

\section{Derivation of the equation (6)}

We write functions given in Section 2 as functions of $\Si$ since the unknown parameter is $\Si$ in this example.
For $V=(\sum_{i=1}^nD_i^{-1})^{-1}$ and $D_i=\Si+S_i$, $\Vh-V$ can be expanded as
\begin{align*}
\Vh-V
=
V\Big\{\sum_{i=1}^nD_i^{-1}(\Sih-\Si)D_i^{-1}\Big\}V+O_p(n^{-2}).
\end{align*}
Since the first term on the right side of the above equation is of order $O_p(n^{-3/2})$, we only need to consider this term to derive the expressions given in (6).

At first, we evaluate $B_1^\ast$.
It is noted that we have
\begin{align*}
&V^{-1/2}(\Vh-V)V^{-1/2}\\
=&\frac1n\sum_{j=1}^n V^{1/2}\Big\{\sum_{i=1}^nD_i^{-1}D_j^{1/2}\{u_ju_j^t-I_p\}D_j^{1/2}D_i^{-1}\Big\}V^{1/2}+O_p(n^{-1}),
\end{align*}
where $u_j$ are independently distributed as the standard normal distribution.
Then, we have
\begin{align}
B_1^\ast=\E\Big[&\tr^2\Big\{V^{-1/2}(\Vh-V)V^{-1/2}\Big\}\Big]\nonumber\\
=&\frac1{n^2}\sum_{i=1}^n\sum_{j=1}^n\sum_{k=1}^n \E\Big[\tr\Big\{VD_j^{-1}D_i^{1/2}\{u_iu_i^t-I_p\}(\Si+S_i)^{1/2}D_j^{-1}\Big\}\nonumber\\
&\times \tr\Big\{VD_k^{-1}D_i^{1/2}\{u_iu_i^t-I_p\}(\Si+S_i)^{1/2}D_k^{-1}\Big\}\Big]\nonumber\\
\begin{split}
=&
\frac2{n^2}\sum_{i=1}^n\sum_{j=1}^n\sum_{k=1}^n \tr\Big[D_i^{1/2}D_j^{-1}VD_j^{-1}D_i^{1/2}D_i^{1/2}D_k^{-1}VD_k^{-1}D_i^{1/2}\Big].
\label{eqn:a1}
\end{split}
\end{align}

Next, we evaluate $B_2^\ast$.
It is noted that we have 
\begin{align*}
(\Vh&-V)V^{-1}(\Vh-V)V^{-1}\\
=&
V\Big\{\sum_{i=1}^nD_i^{-1}(\Sih-\Si)D_i^{-1}\Big\}V\Big\{\sum_{i=1}^nD_i^{-1}(\Sih-\Si)D_i^{-1}\Big\}+O_p(n^{-3/2})
\end{align*}
and that $\Sih-\Si$ can be written as
\begin{align*}
\Sih-\Si
=
\frac1n \sum_{i=1}^n \{(y_i-\be)(y_i-\be)^t-\Si -S_i\}+O_p(n^{-1}).
\end{align*}
Then, for $u_i$ for $i=1,\ldots,n$ which are independently distributed as the multivariate standard normal distribution, it holds that for $\ell, m=1,\ldots,n$,
\begin{align*}
\E\Big[&
(\Sih-\Si)D_\ell^{-1}VD_m^{-1}(\Sih-\Si)\Big]\\
=&
\frac1{n^2} \sum_{i=1}^n\sum_{j=1}^nD_i^{1/2}\E\Big[(u_iu_i^t-I_p)(\Si+S_i)^{1/2}D_\ell^{-1}VD_m^{-1}D_j^{1/2}(u_ju_j^t-I_p)\Big]D_j^{1/2}\\
=&
\frac1{n^2} \sum_{i=1}^nD_i^{1/2}\E\Big[(u_iu_i^t-I_p)(\Si+S_i)^{1/2}D_\ell^{-1}V(\Si)D_m^{-1}D_i^{1/2}(u_iu_i-I_p)\Big]D_i^{1/2}\\
=&
\frac1{n^2} \sum_{i=1}^nD_i^{1/2}(L_{i\ell m}+\tr(L_{i\ell m})I_p)D_i^{1/2}
\end{align*}
for $L_{i\ell m}=D_i^{1/2}D_\ell^{-1}VD_m^{-1}D_i^{1/2}$.
Then, we have
\begin{equation}
\begin{split}
B_2^\ast=&\E[\tr(\{(\Vh-V)V^{-1}\}^2)]\\
=&\frac1{n^2}\sum_{i,\ell,m=1}^n\tr\Big(D_\ell^{-1}D_i^{1/2}L_{i\ell m}D_i^{1/2}D_m^{-1}V\Big)+\frac1{n^2} \sum_{i,\ell,m=1}^n\tr(L_{i\ell m})\tr\Big(D_\ell^{-1}D_iD_m^{-1}V\Big)\\
=&\frac1{n^2} \sum_{i=1}^n\tr\Big\{V\sum_{j=1}^n\Big(D_j^{-1}D_iD_j^{-1} \Big)^2\Big\}+\frac1{n^2} \sum_{i,\ell,m=1}^n\tr^2\Big(D_i^{-1}D_jD_k^{-1}V(\Si)\Big).
\label{eqn:a2}
\end{split}
\end{equation}

Finally, we evaluate $B_3^\ast$.
From the equation (\ref{eqn:eh}), for $V=(\sum_{i=1}^nD_i^{-1})^{-1}$ we have 
\begin{align*}
\E[K]=&
\E\Big[
V^{1/2}\Big\{\sum_{i=1}^nD_i^{-1}(\Sih-\Si)D_i^{-1}\Big\}V\Big\{\sum_{i=1}^nD_i^{-1}(\Sih-\Si)D_i^{-1}\Big\}V^{1/2}\\
&\ \ \ -V^{1/2}\Big\{\sum_{i=1}^nD_i^{-1}(\Sih-\Si)D_i^{-1}(\Sih-\Si)D_i^{-1}\Big\}V^{1/2}\Big]+O(n^{-3/2}).
\end{align*}
The trace of the first term in the above equation is exactly the same with $B_2^\ast$ and is given in (\ref{eqn:a2}).
To evaluate the second term, it is noted that
\begin{align*}
\E\Big[&\sum_{i=1}^nD_i^{-1}(\Sih-\Si)D_i^{-1}(\Sih-\Si)D_i^{-1}\Big]\\
=&\frac1{n^2} \sum_{i=1}^n\sum_{j=1}^n D_i^{-1}D_jD_i^{-1}D_jD_i^{-1}+\frac1{n^2} \sum_{i=1}^n\sum_{j=1}^n \tr(D_i^{-1}D_j)D_i^{-1}D_jD_i^{-1}.
\end{align*}
Then, the trace of the second term in the above equation is given by
\begin{align}
-\tr&\Big(\E\Big[V^{1/2}\Big\{\sum_{i=1}^nD_i^{-1}(\Sih-\Si)D_i^{-1}(\Sih-\Si)D_i^{-1}\Big\}V^{1/2}\Big]\Big)\nonumber\\
=&-\frac1{n^2} \sum_{i=1}^n\sum_{j=1}^n \tr\Big(VD_i^{-1}D_jD_i^{-1}D_jD_i^{-1}\Big)-\frac1{n^2} \sum_{i=1}^n\sum_{j=1}^n \tr\Big(D_i^{-1}D_j\Big)\tr\Big(VD_i^{-1}D_jD_i^{-1}\Big).
\label{eqn:a3}
\end{align}
Equation (\ref{eqn:a1}), (\ref{eqn:a2}) and (\ref{eqn:a3}) lead to the expression given in (\ref{B123}) in the main document.

\medskip

\end{document}